\documentclass[11pt]{article}
\usepackage{amsmath}
\usepackage{amsfonts}
\usepackage{amsthm}
\usepackage{amsmath,amssymb}
\usepackage{graphicx}
\usepackage{color}
\usepackage{url}
\usepackage{graphicx}
\usepackage{appendix}
\usepackage{booktabs}

\usepackage{makecell}
\usepackage{multirow}

\renewcommand{\arraystretch}{1.2}
\allowdisplaybreaks[4]


\setlength{\topmargin}{-0.1in} \setlength{\textheight}{8.3in}
\setlength{\oddsidemargin}{0.1 in} \setlength{\textwidth}{6.2 in}



\newtheorem{theorem}{Theorem}
\newtheorem{lemma}{Lemma}

\newtheorem{definition}{Definition}

\newtheorem{remark}{Remark}

\newcommand{\RNum}[1]{\uppercase\expandafter{\romannumeral #1\relax}}

\newcommand{\ftwon}{{\mathbb F}_{2^n}}

\newcommand{\ftwo}{{\mathbb F}_{2}}
%


\newcommand{\ls}[1]
    {\dimen0=\fontdimen6\the\font\lineskip=#1\dimen0
     \advance\lineskip.5\fontdimen5\the\font
     \advance\lineskip-\dimen0
     \lineskiplimit=0.9\lineskip
     \baselineskip=\lineskip
     \advance\baselineskip\dimen0
     \normallineskip\lineskip\normallineskiplimit\lineskiplimit
     \normalbaselineskip\baselineskip
     \ignorespaces}


\begin{document}

\title{A new class of S-boxes with optimal Feistel boomerang uniformity}
\author{Yuxuan Lu\footnotemark[1]\thanks{Y. Lu, N. Li,  and L. Wang are with the Hubei Provincial Engineering Research Center of Intelligent Connected Vehicle Network Security, School of Cyber Science and Technology, Hubei University, Wuhan 430062, China. Email: yuxuan.lu@aliyun.com, nian.li@hubu.edu.cn, wangtaolisha@163.com},
 Sihem Mesnager\footnotemark[2]\thanks{S. Mesnager is with the Department of Mathematics, University of Paris VIII, F-93526 Saint-Denis, Paris, France,
the Laboratory of Analysis, Geometry, and Applications (LAGA), University Sorbonne Paris Nord CNRS,
UMR 7539, F-93430, Villetaneuse, France, and also with the T\'el\'ecom Paris, Palaiseau, France. Email:
smesnager@univ-paris8.fr},
Nian Li\footnotemark[1],
Lisha Wang\footnotemark[1],
Xiangyong Zeng \footnotemark[3]\thanks{X. Zeng is with Hubei Key Laboratory of Applied Mathematics, Faculty of Mathematics and Statistics, Hubei University, Wuhan 430062, China. Email: xzeng@hubu.edu.cn}}
\date{\today}%
\maketitle
\begin{abstract}

The Feistel Boomerang Connectivity Table ($\rm{FBCT}$), which is the Feistel version of the Boomerang Connectivity Table ($\rm{BCT}$), plays a vital role in analyzing block ciphers' ability to withstand strong attacks, such as boomerang attacks. However, as of now, only four classes of power functions are known to have explicit values for all entries in their $\rm{FBCT}$. In this paper, we focus on studying the FBCT of the power function $F(x)=x^{2^{n-2}-1}$ over $\mathbb{F}_{2^n}$, where $n$ is a positive integer. Through certain refined manipulations to solve specific equations over $\mathbb{F}_{2^n}$ and employing binary Kloosterman sums, we determine explicit values for all entries in the  $\rm{FBCT}$ of $F(x)$ and further analyze its Feistel boomerang spectrum. Finally,  we demonstrate that this power function exhibits the lowest Feistel boomerang uniformity.

\medskip

 \noindent{\bf Keywords} Symmetric cryptography $\cdot$  S-Box (Substitution box) $\cdot$ Feistel Boomerang Connectivity Table $\cdot$ Feistel boomerang uniformity $\cdot$  Vectorial function $\cdot$ Power function $\cdot$ Kloosterman sum.
 \medskip

 \noindent{\bf Mathematics Subject Classification: }  06E30, 05A05, 35F05, 11T06, 11T55, 94A60.

\end{abstract}

\section{Introduction}

In symmetric cryptography, block ciphers use S-boxes (substitution boxes) that take $n$ binary inputs and produce an $m$-bit output, where $n$ and $m$ are positive integers. The S-box is the main nonlinear component of the cryptographic algorithm and plays a crucial role in enhancing its security. One of the most potent attacks in symmetric-key cryptography is the differential cryptanalysis attack, introduced by Biham and Shamir in 1991. Differential cryptanalysis is fundamental for evaluating the security of block ciphers. The ability of a cryptographic algorithm to resist differential attacks is closely linked to the resistance of the S-box it uses. In 1993, Nyberg introduced the Differential Distribution Table ($\rm {DDT}$) and differential uniformity to measure an S-box's resistance to differential attacks. The smaller the differential uniformity of an S-box, the stronger its resistance to these attacks. An S-box with a differential uniformity of 2 is considered almost perfect nonlinear ($\rm {APN}$).

The boomerang attack, introduced by Wanger (\cite{WD.1999}) in 1999, is a variant of the differential attack that is an essential cryptographic analysis technique. At Eurocrypt 2018, Cid et al.  (\cite{CC.HT2018}) improved the analysis of boomerang-style attacks by introducing the Boomerang Connectivity Table ($\rm{BCT}$). To measure a function's resistance against boomerang attacks, Boura and Canteaut  (\cite{BC.CA2018}) introduced the concept of boomerang uniformity, similar to the resistance against differential attacks. Most previous research has focused on Substitution-Permutation Network (SPN) structures and has primarily overlooked ciphers following the Feistel Network model. It is important to address Feistel Network ciphers as well due to their significant practical applications, such as 3-DES and CLEFIA. To fill this gap, Boukerrou et al.  (\cite{BH.HP2020}) extended the Boomerang Connectivity Table (BCT) to accommodate S-boxes within Feistel Network ciphers, even when these S-boxes are not permutations and introduced the Feistel Boomerang Connectivity Table ($\rm{FBCT}$).

The concise and efficient representation of power functions, especially in hardware, has attracted much attention. The Feistel boomerang distinguisher ($\rm{FBCT}$) calculation for these functions has been a significant area of recent research. In a paper by Eddahmani and  Mesnager  (\cite{ES.MS2022}), the values of the entries of the FBCT for the inverse, Gold, and Bracken-Leander functions over finite fields with even characteristics were fully specified. The authors also estimated the number of elements $(a,b)\in\ftwon^2$ with potential values in the FBCT. In 2023, Man \cite{MY.MS2023} computed the specific values of all entries in the $\rm{FBCT}$ for a Niho type power function $F(x)=x^{2^{m+1}-1}$ over $\ftwon$ for $n=2m$. Generally, it is challenging to determine the explicit values of all entries in the  $\rm{FBCT}$ and the Feistel boomerang spectrum for a given function. For further insights on the Feistel boomerang uniformity of certain power functions over $\ftwon$, readers are directed to \cite{GK.HS2309}, \cite{GK.HS2310} , and  \cite{MY.LN2023}.  Boukerrou et al. have also demonstrated in \cite{BH.HP2020} that ${\rm FBCT}_F(a,b)\equiv0\ ({\rm mod}\ 4)$ for all $a,b\in\ftwon$, and all the non-trivial entries of the  $\rm{FBCT}$ of a function $F$ over $\ftwon$ are 0 if and only if it is APN. Consequently, the minimum value of the Feistel boomerang uniformity for a non-APN function is $4$. Table \ref{table-1} lists power functions with known Feistel boomerang uniformity, excluding APN power functions.


In this paper, we explore the Feistel boomerang properties of a certain class of power functions represented as $F(x)=x^{2^{n-2}-1}$ over $\mathbb{F}_{2^n}$. By employing specific techniques for solving equations over finite fields and the binary Kloosterman sum, we can calculate the explicit values of all entries of the   $\rm{FBCT}$ of $F$ and determine the Feistel boomerang spectrum of $F$.
As a main result, we demonstrate that the Feistel boomerang uniformity of $F$ is $4$ if $n$ is not a multiple of $3$  and the uniformity is $8$ when $n$ is a multiple of $3$. Notably, the power function $F(x)=x^{2^{n-2}-1}$ over $\mathbb{F}_{2^n}$ investigated in this paper constitutes the fifth class of functions with the known Feistel boomerang spectrum. Furthermore, it possesses the lowest Feistel boomerang uniformity of $4$ when $3\nmid n$  among the non-APN power functions.

The paper is organized as follows. Section \ref{Preliminaries} presents some background information, basic definitions, and notation.  Section \ref{statement} outlines the main results of the paper, along with detailed proofs in subsections \ref{proof1} and \ref{proof2}.  Finally, Section \ref{conc} provides the paper's conclusion.

\begin{table}[h]
\caption{$F(x)=x^d$ over $\ftwon$ with known Feistel boomerang uniformity}\label{table-1}
\renewcommand\arraystretch{1.1}
\setlength\tabcolsep{10pt}
\centering
\begin{tabular}{cccccc}
		\toprule
     No.& $d$ & Condition & $\tilde{\beta} (F)$ & Spectrum &  Ref. \\
       \midrule
      1  & $2^n-2$ & $n$ even & 4 & known &  \cite{ES.MS2022} \\
     2 & $2^k+1$ & ${\rm gcd}(n,k)=d,d\ne1$ & $2^n$& known& \cite{ES.MS2022}\\
     3 &$2^{2k}+2^k+1$ & $n=4k$ & $2^{2k}$ &known& \cite{ES.MS2022}\\
     4  & $2^{m+1}-1$ & $n=2m$ & $2^m$ &known& \cite{MY.MS2023} \\
     5  & $2^m-1$  & $n=2m+1$ or $n=2m$  &  $2^m-4$ & unknown & \cite{GK.HS2309}  \\
     6  & $2^{\frac{n+3}{2}}-1$ & $n$ odd & $4$ & unknown & \cite{GK.HS2309}   \\
     7  & $21$  & $n$ odd or $n$ even &  $4$ or $16$ & unknown & \cite{GK.HS2310}  \\
     8  & $2^n-2^s$  & ${\rm gcd}(n, s+1)=1$, $n-s=3$ &  $4$ & unknown &  \cite{GK.HS2310} \\
     9  & $2^{m+1}+3$  & $n=2m+1$ or $n=2m$  & $4$ or $2^m$  & unknown & \cite{MY.LN2023}  \\
     10 & $7$ & arbitrary  &  $4$ & unknown & \cite{MY.LN2023}\\
     11 &$2^{n-2}-1$ & {\rm 3} $\nmid n$ or {\rm 3} $\mid n$  & $4$ or $8$ &known & This paper\\
\bottomrule
\end{tabular}
\end{table}

\section{Preliminaries}\label{Preliminaries}

Throughout this paper,  we use the notation $\ftwon$ to represent the finite field with $2^n$ elements, $\ftwon^\ast$ to denote the cyclic group $\ftwon\setminus\{0\}$, and ${\rm Tr}_1^n(\cdot)$ to represent the absolute trace function from ${\mathbb F}_{2^n}$ onto its prime field ${\mathbb F}_{2}$, where $n$ is a positive integer.
We note that for $x\in\ftwon$, ${\rm Tr}_1^n(x)=\sum_{i=0}^{n-1}x^{2^i}$.


Vectorial Boolean functions, also known as multi-output Boolean functions, are functions that map from the finite field $\mathbb{F}_{2^n}$ to $\mathbb{F}_{2^m}$, where $m$ and $n$ are positive integers. These functions are commonly used in the design of block ciphers for cryptography. S-boxes, important components of block cyphers, play a crucial role in symmetric key algorithms for substitution and are significant for the security of several block cyphers.   For complete, comprehensive, and deep developments on vectorial functions for symmetric cryptography,  see the book  \cite{Carlet-book-2021}.

One of the most important attacks on a block cipher is the differential attack, which Biham and Shamir introduced in 1991 \cite{BE.SA1991}.  For a vectorial Boolean function, the tools introduced by Nyberg \cite{Nyberg} in 1993, such as the Difference Distribution Table ($\rm{DDT}$) and the differential uniformity $\delta_F$, are used to study them. The differential uniformity $\delta_F$ of a permutation $F$ (used as an S-box inside a cryptosystem) measures the resistance of the block cipher against differential cryptanalysis. The differential uniformity of a vectorial Boolean function $F: \mathbb{F}_{2^n}\rightarrow \mathbb{F}_{2^n}$ is defined as:
$$
\delta_F=\max_{a,b\in \mathbb{F}_{2^n}, a\neq 0}\rm{DDT}_F(a,b),
$$
where $\rm{DDT}_F(a,b)$ is the entry at $(a,b)\in \left(\mathbb{F}_{2^n}\right)^2$ of the difference distribution table given by:
$$
\rm{DDT}_F(a,b)=|\{x\in \mathbb{F}_{2^n},\ F(x+a)+F(x)=b\}|.
$$

When $F$ is used as an S-box inside a cryptosystem, a smaller value of $\delta_F$ indicates better resistance against a differential attack. Typically, the most optimal functions satisfy $\delta_F=2$ and are called almost perfect nonlinear (APN).

The boomerang attack is an important cryptanalytical technique used on block ciphers, which was introduced as a variant of the technique known as differential cryptanalysis. The boomerang attack can be helpful in situations where no significant differential probability is present for the entire cipher. The resistance of an S-box $F$ (a permutation of $\mathbb{F}_{2^n}$) against boomerang attacks can be measured through its Boomerang Connectivity Table ($\rm{BCT}$) introduced by Cid et al.~\cite{CC.HT2018}. Its boomerang uniformity, denoted by $\beta_F$, is defined as
$$
\beta_F=\max_{a,b\in\mathbb{F}_{2^n},ab\neq 0}\rm{BCT}_F(a,b),
$$
where $\rm{BCT}_F(a,b)$ represents the entry at $(a,b)\in\left(\mathbb{F}_{2^n}\right)^2$ of the Boomerang Connectivity Table of $F$, i.e.,
$$
\rm{BCT}_F(a,b)=|\{x\in\mathbb{F}_{2^n},F^{-1}(F(x)+b)+F^{-1}(F(x+a)+b)=a\}|.
$$

 Li {\em et al.} \cite{LQL19} showed that the BCT table could be defined for vectorial Boolean functions which are not necessarily permutations as follows:
\begin{equation*}
\rm{BCT}_F(a,b)=|\{x,y\in\mathbb F_{2^n}: F(y)+F(x)=b \mbox{ and } F(y+a)+F(x+a)=b\}|.
\end{equation*}
Note that this equivalent formulation does not require the compositional inverse of the function $F$ and enables us to compute the BCT for non-permutations.

The BCT of various families of S-boxes has been studied, and further results on the BCT have been presented for multiple permutations of $\mathbb{F}_{2^n}$ in~\cite{BH.HP2020}. Additionally, the BCT, as presented in~\cite{CC.HT2018}, is valid for a block cipher with a Substitution-Permutation Network (SPN) structure and has been extended to handle S-boxes for block ciphers with a Feistel construction. Boukerrou et al. (\cite{BH.HP2020}) introduced a new tool called the Feistel Boomerang Connectivity Table ($\rm{FBCT}$). The  $\rm{FBCT}$ of a vectorial Boolean function $F : \mathbb{F}_{2^n}\rightarrow\mathbb{F}_{2^n}$ is a $2^n\times 2^n$ table, and the entry at $(a,b)$ is defined as follows.

\begin{definition}\label{definition-FBCT}\rm (\cite{BH.HP2020})
Let $F(x)$ be a mapping from ${\mathbb F}_{2^n}$ to itself. The Feistel Boomerang Connectivity Table ($\rm{FBCT}$) is a $2^n\times2^n$ table defined for $(a, b)\in{\mathbb F}_{2^n}^2$ by
$${\rm FBCT}_F(a, b)=|\{x\in\ftwon: F(x)+F(x+a)+F(x+b)+F(x+a+b)=0\}|.$$
It is clear that the  $\rm{FBCT}$ satisfies ${\rm FBCT}_F(a,b)=2^n$ if $ab(a+b)=0$. Hence, the Feistel boomerang uniformity of $F(x)$ is defined by
$$\tilde{\beta} (F)=\max_{a, b\in \ftwon, ab(a+b)\ne 0} {\rm FBCT}_F(a, b).$$

The Feistel boomerang spectrum is given by the multiset $\{{\rm FBCT}_F(a,b):a,b\in\ftwon\}$.
\end{definition}


\section{Statement of the main result}\label{statement}

\begin{theorem}\label{main-thm}
Let $F$ be the  power function  defined over $\ftwon$  by $F(x)=x^{2^{n-2}-1}$ ($n>6$).
  The two following results hold.
  \begin{itemize}
    \item [\rm (A)] {\bf (The $\rm{FBCT}$ values of $F$)} ${\rm FBCT}_F(a,b)\in\{2^n,0,4\}$ if $3\nmid n$ and ${\rm FBCT}_F(a,b)\in\{2^n,0,4,8\}$ if $3\mid n$ for arbitrary $a,b\in\ftwon$.
    \item [\rm (B)] {\bf (The  Feistel boomerang spectrum)}: For  any $(a,b)$ ranges in $\ftwon^2$, the Feistel boomerang spectrum of $F$ satisfies
\[\begin{tabular}{|c|c|}
 \hline
  ${\rm FBCT}(a,b)$ &{\rm Frequency} \\\hline
    $2^n$  &  $3\cdot2^n-2$ \\\hline
    \multirow{2}{*}{$0$}  &  $\frac{(2^n-1)(3\cdot2^n+3K_n(1)-12)}{4}$, $n$ {\rm odd}\\\cline{2-2}&$\frac{(2^n-1)(3\cdot2^n-3K_n(1)+8)}{4}$, $n$ {\rm even}\\\hline
    \multirow{2}{*}{$4$}  &  $\frac{(2^n-1)(2^n-3K_n(1)+4)}{4}$, $n$ {\rm odd}\\\cline{2-2}&$\frac{(2^n-1)(2^n+3K_n(1)-16)}{4}$, $n$ {\rm even}\\\hline
\end{tabular}\]
if $3\nmid n$, and for $3\mid n$, we have
\[\begin{tabular}{|c|c|}
 \hline
  ${\rm FBCT}(a,b)$ & {\rm Frequency} \\\hline
    $2^n$  &  $3\cdot2^n-2$ \\\hline
    \multirow{2}{*}{$0$}  &  $\frac{(2^n-1)(3\cdot2^n+3K_n(1)-12)}{4}$, $n$ {\rm odd}\\\cline{2-2}&$\frac{(2^n-1)(3\cdot2^n-3K_n(1)+8)}{4}$, $n$ {\rm even}\\\hline
    \multirow{2}{*}{$4$}  &  $\frac{(2^n-1)(2^n-3K_n(1)-20)}{4}$, $n$ {\rm odd}\\\cline{2-2}&$\frac{(2^n-1)(2^n+3K_n(1)-40)}{4}$, $n$ {\rm even}\\\hline
   $8$  &  $6(2^n-1)$ \\\hline
\end{tabular}\]
   \end{itemize}
where $K_n(1)$ is the value of the Kloosterman sum at point 1 that is determined in \cite{Carlitz.1969} as follows (on the assumption that  $\frac{1}{0}:=0$):
  \[K_n(1)=\sum\limits_{x\in\ftwon}(-1)^{{\rm Tr}_1^n(x+x^{-1})}=1+\frac{(-1)^{n-1}}{2^{n-1}}\sum_{i=0}^{\lfloor\frac{n}{2}\rfloor}(-1)^i\binom{n}{2i}7^i.\]

\end{theorem}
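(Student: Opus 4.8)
The plan is to exploit the monomial structure of $F$ to collapse the two-parameter count to a one-parameter family, to turn the resulting root-counting problem into trace conditions, and finally to evaluate the frequency of those conditions by a character sum that collapses to $K_n(1)$.

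First I would use homogeneity. Since $F(\lambda x)=\lambda^{d}F(x)$ with $d=2^{n-2}-1$, for $a\neq 0$ the substitution $x\mapsto ax$ gives ${\rm FBCT}_F(a,b)={\rm FBCT}_F(1,b/a)$, so together with the symmetry ${\rm FBCT}_F(a,b)={\rm FBCT}_F(b,a)$ everything reduces to $N(t):={\rm FBCT}_F(1,t)$ for $t\notin\{0,1\}$. Writing $G(x)=F(x)+F(x+1)+F(x+t)+F(x+1+t)$, the four arguments form the coset of $V=\{0,1,t,1+t\}$ through $x$, so $G$ is invariant under translation by $V$; hence its zero set is a union of $V$-cosets and $N(t)\equiv 0\pmod 4$ automatically. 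I would then split the count into the boundary coset $x\in V$ and the generic part $x\notin V$.

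For the generic part I use $F(x)=x^{2^{n-2}}x^{-1}$ for $x\neq0$. Clearing denominators in $G(x)=0$ (all four arguments nonzero) yields, after grouping,
\[ t(t+1)\,x^{2^{n-2}}+x\bigl(t(x+t)+t^{2^{n-2}}(x+1)\bigr)=0 . \]
Raising to the fourth power is a bijection of $\ftwon$, so it introduces no spurious roots while linearizing $x^{2^{n-2}}$ into $x$; this gives
\[ (t^4+t)x^8+(t^8+t)x^4+(t^8+t^4)x=0 . \]
Setting $c=t^2+t\neq0$ and dividing out the factor $x(x+1)(x+t)(x+1+t)$ that vanishes on $V$, one is left with the quartic factor
\[ (c+1)x^4+(c^2+1)x^2+c(c+1)x+c^2=(c+1)\mu^2+c(c+1)\mu+c^2,\qquad \mu=x^2+x+c, \]
which is a quadratic in $\mu$. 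The root count is therefore controlled by two nested Berlekamp conditions: solvability of the $\mu$-quadratic (governed by ${\rm Tr}_1^n(1/(c+1))$) and then of $x^2+x=\mu_0+c$ (governed by ${\rm Tr}_1^n(\mu_0+c)$). Since the two $\mu$-roots differ by $c$ and every $y^2+y$ has zero trace, the $V$-coset structure forces the generic contribution to be exactly $0$ or $4$; the degenerate case $c=1$ (cube roots of unity, present only for even $n$) makes the quadratic constant and contributes $0$.

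For the boundary coset, $G(0)=1+t^{d}+(1+t)^{d}$, and clearing denominators shows $G|_V=0$ iff $t^{2^{n-2}}=t^2$, i.e. $t\in\mathbb{F}_{2^{\gcd(n,3)}}$. When $3\nmid n$ this forces $t=1$ (excluded), so $V$ never contributes and $N(t)\in\{0,4\}$; when $3\mid n$ it yields the six values $t\in\mathbb{F}_{8}\setminus\mathbb{F}_{2}$, each adding $4$, so that $N(t)$ reaches $8$ there. This is precisely the source of the dichotomy and of part (A). For the spectrum (B) I would count, for each target value, the number of admissible $t$; each such $t$ accounts for $2^n-1$ pairs $(a,b)$, and the entries with $ab(a+b)=0$ give the $3\cdot2^n-2$ occurrences of $2^n$. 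The number of $t$ for which the quartic splits is obtained by summing the indicators of the relevant trace conditions, i.e. evaluating sums $\sum_t(-1)^{{\rm Tr}_1^n(\cdots)}$ in which $c=t^2+t$ enters through $1/(c+1)=1/(t^2+t+1)$ and companion expressions; after a suitable substitution these collapse to the binary Kloosterman sum $K_n(1)$, with the parity of $n$ entering via ${\rm Tr}_1^n(1)$ to produce the two rows of each table. I expect the genuine difficulty to lie in this last step rather than in the (routine) algebraic reduction: organizing the joint trace conditions so that the count is exactly an affine expression in $K_n(1)$, and tracking the low-order corrections (the small additive constants in the frequency tables) that come from the exceptional $t$ — the cube roots with $c=1$ and the six $\mathbb{F}_{8}$-values — together with the even/odd split.
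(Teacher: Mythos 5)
Your reduction for part (A) tracks the paper's proof almost step for step: the normalization to a single ratio (the paper sets $y=x/b$, $c=a/b$; you set $t=b/a$, which is the same thing up to the symmetry of the FBCT), the boundary-coset analysis giving $t^{2^{n-3}}=t$, hence $t\in{\mathbb F}_{2^{\gcd(n,3)}}$, the clearing of denominators, the fourth-power trick, and your residual quartic are identical to the paper's equations (3)--(7) (your $(c+1)x^4+(c^2+1)x^2+c(c+1)x+c^2$ with $c=t^2+t$ expands to exactly the paper's $(t^2+t+1)x^4+(t^4+t^2+1)x^2+(t^4+t)x+t^4+t^2$). The one genuine novelty is how you resolve the quartic: the paper invokes the Leonard--Williams factorization criterion (Lemma \ref{lemma2-quar-root}), whose companion cubic splits as $(r+1)(r+t)(r+t+1)$ and yields the three explicit conditions ${\rm Tr}_1^n(\omega_1)={\rm Tr}_1^n(\omega_2)={\rm Tr}_1^n(\omega_3)=0$, whereas your substitution $\mu=x^2+x+c$, turning the quartic into the nested quadratic $(c+1)\mu^2+c(c+1)\mu+c^2$, is correct (I verified the identity) and more self-contained; it also gives the cleanest explanation of the $0$-or-$4$ dichotomy, since the two $\mu$-roots sum to $c$ and ${\rm Tr}_1^n(c)={\rm Tr}_1^n(t^2+t)=0$, and it handles the degenerate case $t^2+t+1=0$ correctly.

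The gap is in part (B), and it is more than routine verification. First, your second solvability condition ${\rm Tr}_1^n(\mu_0+c)=0$ involves a root $\mu_0$ that admits no rational expression in $t$ (it exists only when the first trace condition holds), so as written the count over $t$ cannot even be set up. It is fixable: writing $\mu_0=c\nu_0$ with $\nu_0^2+\nu_0=1/(t^2+t+1)$ and using $c=t^2+t=w^2+w$ with $w=t$, the identity ${\rm Tr}_1^n\bigl((w^2+w)\nu_0\bigr)={\rm Tr}_1^n\bigl(w^2(\nu_0^2+\nu_0)\bigr)$ gives ${\rm Tr}_1^n(\mu_0+c)={\rm Tr}_1^n\bigl(t^2/(t^2+t+1)\bigr)$, recovering exactly the paper's $\omega_2$ --- but this step is absent from your proposal, and without it your "companion expressions" are unidentified. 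Second, the collapse of the indicator sums to $K_n(1)$, which you defer as "a suitable substitution", is the actual analytic core of the paper: its Lemma \ref{lemma4-S} evaluates $\sum_x(-1)^{{\rm Tr}_1^n((x+1)/(x^2+x+1))}$ by substituting $h=(x+1)/(x^2+x+1)$, counting fibers via Lemma \ref{lemma1-qua-root}, and computing $|\{h:{\rm Tr}_1^n(1/h)={\rm Tr}_1^n(h)=1\}|=(2^n+K_n(1))/4$, giving $K_n(1)-2$ for $n$ odd and $K_n(1)$ for $n$ even; on top of this one needs the bookkeeping relating $\sum_t(-1)^{{\rm Tr}_1^n(1/(t^2+t+1))}$ and its companions to that sum via $t\mapsto 1/t$ and $t\mapsto t+1$, together with the exclusions $t\in\{0,1\}$ and $t^2+t+1=0$ (present only for $n$ even), which is precisely where all the small constants in the frequency tables originate. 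Third, for $3\mid n$ the claimed frequency $\Theta_8=6(2^n-1)$ requires that \emph{every} $t\in{\mathbb F}_8\setminus{\mathbb F}_2$ satisfies the trace conditions over ${\mathbb F}_{2^n}$ (for both parities of $n/3$, since ${\rm Tr}_1^n$ restricted to ${\mathbb F}_8$ is either ${\rm Tr}_1^3$ or identically zero); the paper settles this by evaluating its count over ${\mathbb F}_8$ with $K_3(1)=-4$, while your phrase "$N(t)$ reaches $8$ there" leaves open the possibility that some of the six values only attain $4$, which would change both $\Theta_8$ and $\Theta_4$. So your part (A) is a sound, slightly more elementary variant of the paper's argument, but part (B) remains a plan whose decisive computations are asserted rather than carried out --- as you yourself candidly flag.
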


\begin{remark}
According to the results obtained by Lachaud and Wolfmann in their 1990 paper \cite{LG.WJ1990}, the Kloosterman sum values in the range $[-2^{\frac{n}{2}+1}+1,2^{\frac{n}{2}+1}+1]$ consist of all multiples of 4. From this, it can be deduced that $2^n-3K_n(1)+4>0$ if $n$ is odd and greater than $6$ and $2^n+3K_n(1)-16>0$ if $n$ is even and greater than 6. This demonstrates that $\tilde{\beta} (F)=4$ achieves the lowest value for a non-APN function when $3$ does not divide $n$, and $\tilde{\beta}(F)=8$ otherwise.
\end{remark}

We emphasize that it is generally a challenge to determine the Feistel boomerang uniformity for a given function, not to say its Feistel boomerang spectrum, see Table \ref{table-1}. By meticulously and carefully solving targeted equations over $\mathbb{F}_{2^n}$ and using the binary Kloosterman sum, we can determine the Feistel boomerang spectrum of $F$.

\subsection{Proof of part (A) of Theorem \ref{main-thm}}\label{proof1}

This subsection aims to prove part (A) of Theorem \ref{main-thm}. We shall use the results derived from the following statement.
 \begin{lemma}\label{lemma2-quar-root}\rm (\cite {LP.WK1972})
Let $F(x)=x^4+a_2x^2+a_1x+a_0$ with $a_0a_1\ne0$ and the companion cubic $G(y)=y^3+a_2y+a_1$ with the roots $r_1, r_2, r_3$. When the roots exist  in $\ftwon$, set $\omega_i=\frac{a_0r_i^2}{a_1^2}$. Let $h$ polynomial $h$ as $h=(1, 2, 3, \cdots)$ over some field to mean that it decomposes as a product of degree $1$, $2$, $3$, $\cdots$, over that field. The factorization of $F(x)$ over $\ftwon$  is characterized as follows:

\begin{itemize}
\item [\rm (1)] $F=(1, 1, 1, 1)$ corresponds to $G(1, 1, 1)$ and ${\rm Tr}_1^n(\omega_1)={\rm Tr}_1^n(\omega_2)={\rm Tr}_1^n(\omega_3)=0$;
\item [\rm (2)] $F=(2, 2)$ corresponds to  $G=(1, 1, 1)$ and ${\rm Tr}_1^n(\omega_1)=0$, ${\rm Tr}_1^n(\omega_2)={\rm Tr}_1^n(\omega_3)=1$;
\item [\rm (3)] $F=(2, 2)$  corresponds to $G=(3)$;
\item [\rm (4)] $F=(1, 1, 2)$ corresponds to  $G=(1, 2)$ and ${\rm Tr}_1^n(\omega_1)=0$;
\item [\rm (5)] $F=(4)$ corresponds to $G=(1, 2)$ and ${\rm Tr}_1^n(\omega_1)=1$.
\end{itemize}

\end{lemma}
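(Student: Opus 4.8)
The plan is to prove the five-fold classification by combining two elementary facts about characteristic-two polynomials with a bookkeeping argument on the four roots of $F$. First I would record that $F$ is separable: since $a_1\ne 0$ and we are in characteristic two, $F'(x)=a_1$ is a nonzero constant, so $\gcd(F,F')=1$ and $F$ has four distinct roots $\alpha_1,\alpha_2,\alpha_3,\alpha_4$ in $\overline{\ftwo}$ satisfying $\alpha_1+\alpha_2+\alpha_3+\alpha_4=0$ (no $x^3$ term). Consequently the factorization type of $F$ over $\ftwon$ is exactly the cycle type of the Frobenius map $\phi\colon x\mapsto x^{2^n}$ acting on $\{\alpha_1,\alpha_2,\alpha_3,\alpha_4\}$, and the whole lemma becomes a translation of Frobenius cycle types into data read off from $G$.

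The bridge to the companion cubic $G$ is the identity that any factorization of $F$ into two monic quadratics must, because the $x^3$-coefficient vanishes, take the form $F=(x^2+sx+t)(x^2+sx+u)$ with a common linear term $s$. Expanding and matching coefficients gives $t+u=a_1/s$, $tu=a_0$, and $s^3+a_2 s+a_1=0$; hence the admissible values of $s$ are precisely the roots $r_1,r_2,r_3$ of $G$, and one checks these equal the three pairwise sums $\alpha_i+\alpha_j$ of the roots of $F$. For a root $r_i\in\ftwon$, the unordered pair $\{t,u\}$ is recovered from $z^2+(a_1/r_i)z+a_0=0$, which has both roots in $\ftwon$ if and only if ${\rm Tr}_1^n(a_0 r_i^2/a_1^2)={\rm Tr}_1^n(\omega_i)=0$, by the standard criterion that $z^2+pz+q$ (with $p\ne 0$) splits over $\ftwon$ iff ${\rm Tr}_1^n(q/p^2)=0$. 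Thus each resolvent root lying in $\ftwon$ records, through ${\rm Tr}_1^n(\omega_i)$, whether the corresponding pairing of the $\alpha_j$ is realized by a pair of $\ftwon$-rational quadratic factors.

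With these tools I would run through the factorization types of $G$ and read off $\phi$. Since $\{r_1,r_2,r_3\}$ are permuted by $\phi$ exactly as the three pairings of $\{\alpha_1,\dots,\alpha_4\}$ are, via the natural map $S_4\to S_3$ with Klein-four kernel, the number of $r_i$ in $\ftwon$ already fixes the coarse shape of $\phi$: three rational roots ($G=(1,1,1)$) forces $\phi$ into the Klein-four subgroup, so $F$ is $(1,1,1,1)$ or $(2,2)$; one rational root ($G=(1,2)$) forces $\phi$ to be a transposition or a $4$-cycle, giving $F=(1,1,2)$ or $(4)$; and no rational root ($G$ irreducible) forces $\phi$ to be a $3$-cycle, which pins down the remaining type. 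Within each shape, the trace values ${\rm Tr}_1^n(\omega_i)$ decide the finer split: for $G=(1,1,1)$, all three traces vanishing signals that every pairing is rational and $F=(1,1,1,1)$, whereas the pattern ${\rm Tr}_1^n(\omega_1)=0$, ${\rm Tr}_1^n(\omega_2)={\rm Tr}_1^n(\omega_3)=1$ signals that one pairing yields two conjugate non-rational quadratics, i.e. $F=(2,2)$; and for $G=(1,2)$ the single value ${\rm Tr}_1^n(\omega_1)$ separates $(1,1,2)$ from $(4)$.

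The hard part will be the finer bookkeeping inside the case $G=(1,1,1)$: showing that the admissible trace triples on the $\omega_i$ are exactly $(0,0,0)$ and the relabelings of $(0,1,1)$, and that these genuinely separate $(1,1,1,1)$ from $(2,2)$. A clean lever here is the parity identity ${\rm Tr}_1^n(\omega_1)+{\rm Tr}_1^n(\omega_2)+{\rm Tr}_1^n(\omega_3)={\rm Tr}_1^n\!\big(\tfrac{a_0}{a_1^2}(r_1+r_2+r_3)^2\big)=0$, which holds because $r_1+r_2+r_3=0$; this forces the number of indices with trace $1$ to be even and instantly excludes the patterns with one or three $1$'s. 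To finish the $(1,1,1,1)$-versus-$(2,2)$ distinction I would write $F=(x^2+r_1x+t)(x^2+r_1x+u)$ once ${\rm Tr}_1^n(\omega_1)=0$, note that complete splitting depends on the secondary conditions ${\rm Tr}_1^n(t/r_1^2)$ and ${\rm Tr}_1^n(u/r_1^2)$ for the two quadratic factors, and express these in terms of ${\rm Tr}_1^n(\omega_2)$ and ${\rm Tr}_1^n(\omega_3)$ using the symmetric-function relations tying $t,u$ to the remaining resolvent roots. Establishing this compatibility is the technical core; once it is in place the $G=(1,2)$ and irreducible cases follow by the same mechanism, and the five listed configurations are exhausted.
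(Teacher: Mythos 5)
Your overall strategy---separability of $F$ from $F'=a_1\neq0$, identification of the resolvent roots with the pairwise sums $\alpha_i+\alpha_j$ via the two-quadratics factorization $F=(x^2+sx+t)(x^2+sx+u)$, the trace criterion for $z^2+(a_1/r_i)z+a_0$, and the translation through the Frobenius cycle type and the map $S_4\to S_3$ with Klein four-group kernel---is sound, and it is essentially the standard proof of this result; note that the paper itself offers no proof and simply cites \cite{LP.WK1972}, so there is no internal argument to compare against. But there is a substantive problem: your own machinery, carried to completion, \emph{contradicts} item (3) as printed. If $G=(3)$, the image of Frobenius $\phi$ in $S_3$ is a $3$-cycle, and the coset of the Klein four-group lying above a $3$-cycle consists only of $3$-cycles of $S_4$; a $3$-cycle fixes exactly one root of $F$, so $F=(1,3)$, not $(2,2)$. (Sanity check over $\ftwo$: $F(x)=x^4+x^2+x+1=(x+1)(x^3+x^2+1)$ has type $(1,3)$ while its companion cubic $y^3+y+1$ is irreducible; conversely $F=(2,2)$ forces $\phi$ to be a double transposition, which lies in the Klein four-group, so $G=(1,1,1)$ always.) Item (3) as stated in this paper is a misquotation of Leonard--Williams, whose theorem reads $F=(1,3)$ corresponds to $G=(3)$; your phrase ``which pins down the remaining type'' silently glosses over this clash instead of flagging it. The slip is harmless for the paper's application, where $G(t)=(t+1)(t+c)(t+c+1)$ always splits and only items (1)--(2) are invoked, but a proof of the lemma exactly as printed must fail at this point, and an honest completion of your argument refutes it.

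Separately, the step you yourself call ``the technical core''---separating $(1,1,1,1)$ from $(2,2)$ when $G=(1,1,1)$---is left unproved, and the symmetric-function route you sketch via ${\rm Tr}_1^n(t/r_1^2)$ and ${\rm Tr}_1^n(u/r_1^2)$ is unnecessary: it closes inside the framework you already set up. When $G=(1,1,1)$, all three pairings are $\phi$-stable, so $\phi$ lies in the Klein four-group. If $\phi=\mathrm{id}$, each pair $(t,u)=(\alpha_i\alpha_j,\alpha_k\alpha_l)$ is fixed pointwise, all three quadratics $z^2+(a_1/r_i)z+a_0$ split, all traces vanish, and $F=(1,1,1,1)$. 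If $\phi$ is a double transposition, say $(12)(34)$, then for the matching pairing $t=\alpha_1\alpha_2$ and $u=\alpha_3\alpha_4$ are individually fixed (trace $0$), while for each of the other two pairings $\phi$ swaps $t$ and $u$, and $t\neq u$ since $t+u=a_1/r_i\neq0$, so those two traces equal $1$; hence the pattern $(0,1,1)$ and $F=(2,2)$, with the converse directions immediate because the five possible cycle types of $\phi$ exhaust all cases and produce pairwise distinct $(G,\mathrm{trace})$ data. Your parity identity ${\rm Tr}_1^n(\omega_1)+{\rm Tr}_1^n(\omega_2)+{\rm Tr}_1^n(\omega_3)={\rm Tr}_1^n\bigl(\tfrac{a_0}{a_1^2}(r_1+r_2+r_3)^2\bigr)=0$ is correct and a nice consistency check, but it is subsumed by this orbit analysis. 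Finally, record the small facts you use implicitly: $G$ is separable because its root differences are of the form $\alpha_i+\alpha_j\neq0$, and $r_1r_2r_3=a_1\neq0$, so each $r_i\neq0$ and each $\omega_i$ (and the division by $s$) is well defined.
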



According to Definition \ref{definition-FBCT}, it is sufficient to determine the number of solutions of the equation $F(x)+F(x+a)+F(x+b)+F(x+a+b)=0$ for $a,b \in \ftwon$ and $F(x)=x^{2^{n-2}-1}$,  i.e.,
  \begin{equation}\label{S1}
  x^{2^{n-2}-1}+(x+a)^{2^{n-2}-1}+(x+b)^{2^{n-2}-1}+(x+a+b)^{2^{n-2}-1}=0.
  \end{equation}

  \textbf{Case 1:} $ab(a+b)=0$. This is a trivial case, and one gets ${\rm FBCT}_F(a,b)=2^n.$

  \textbf{Case 2:} $ab(a+b)\neq0$. Let $y=\frac{x}{b}$ and $c=\frac{a}{b}$, where $c\neq0,1$. Then (\ref{S1}) is equivalent to
\begin{equation}\label{S2}
y^{2^{n-2}-1}+(y+c)^{2^{n-2}-1}+(y+1)^{2^{n-2}-1}+(y+c+1)^{2^{n-2}-1}=0.
\end{equation}

\textbf{Case 2.1:} If $y\in\{0, 1, c, c+1\}$, then (\ref{S2}) becomes
\begin{equation}\label{S3}
c^{2^{n-2}-1}+1+(c+1)^{2^{n-2}-1}=0.
\end{equation}

Since $c \neq 0,1$, we can multiply both sides of (\ref{S3}) by $c(c+1)$. This gives us $c^{2^{n-2}}=c^2$, which can be further simplified to $c^{2^{n-3}}=c$.
This means that $c\in {\mathbb F}_{2^{{\rm gcd}(n,n-3)}}$. If $3\nmid n$, that is, ${\rm gcd}(n,n-3)=1$, then we have $c\in\ftwo$, which contradicts with $c\neq0,1$. Therefore, when $3\nmid n$, $y=0,1,c,c+1$ are not solutions of (\ref{S2}). If $3\mid n$, then ${\rm gcd}(n,n-3)=3$. In this case, we find that $y=0,1,c,c+1$ are solutions of (\ref{S2}) when $c\in{\mathbb F}_{2^3}\setminus\ftwo$, and $y=0,1,c,c+1$ are not solutions of (\ref{S2}) when $c\in{\mathbb F}_{2^n}\setminus{\mathbb F}_{2^3}$.


\textbf{Case 2.2:} If $y\in\ftwon\setminus\{0,1,c,c+1\}$, multiplying $y(y+c)(y+1)(y+c+1)$ on both sides of (\ref{S2}) gives
 \begin{equation}\label{S4}
 (c^2+c)y^{2^{n-2}}+(c^{2^{n-2}}+c)y^2+(c^{2^{n-2}}+c^2)y=0.
 \end{equation}
Raising $4$-th power to (\ref{S4}) leads to
 \begin{equation}\label{S5}
 (c^4+c)y^8+(c^8+c)y^4+(c^8+c^4)y=0,
 \end{equation}
 which can be factorized as
 $$(c^2+c)y(y+1)(y+c)(y+c+1)((c^2+c+1)y^4+(c^4+c^2+1)y^2+(c^4+c)y+c^4+c^2)=0.$$
 Since $c^2+c\ne0,\ y\ne0,\ 1,\ c,\ c+1$, the above equation is equivalent to
 \begin{equation}\label{S6}
 (c^2+c+1)y^4+(c^4+c^2+1)y^2+(c^4+c)y+c^4+c^2=0.
 \end{equation}
 If $c^2+c+1=0$, (\ref{S6}) can be reduced to $c^4+c^2=(c^2+c)^2=1=0$, which is a contradiction. Then we have $c^2+c+1\neq0$, and (\ref{S6}) can be reduced to\\
  \begin{equation}\label{S7}
  y^4+(c^2+c+1)y^2+(c^2+c)y+\frac{c^4+c^2}{c^2+c+1}=0.
  \end{equation}
  Since $c^2+c\neq0$, we have $\frac{c^4+c^2}{c^2+c+1}\neq0$. By Lemma \ref{lemma2-quar-root}, the companion cubic polynomial of (\ref{S7}) is
  $$G(t)=t^3+(c^2+c+1)t+c^2+c,$$
 which can be factored as $(t+1)(t+c)(t+c+1)$ in $\ftwon$. If $G(t)=0$, we get $r_1=1,\ r_2=c,\ r_3=c+1$. Let $a_0=\frac{c^4+c^2}{c^2+c+1},\ a_1=c^2+c,\ a_2=c^2+c+1$,\ $\omega_1=\frac{a_0r_1^2}{a_1^2}=\frac{a_0}{a_1^2}=\frac{1}{c^2+c+1}$,\ $\omega_2=\frac{a_0r_2^2}{a_1^2}=\frac{a_0c^2}{a_1^2}=\frac{c^2}{c^2+c+1}$,\ $\omega_3=\frac{a_0r_3^2}{a_1^2}=\frac{a_0(c^2+1)}{a_1^2}=\frac{c^2+1}{c^2+c+1}$. Since $G(t)$ can be factored as $(1,1,1)$, by Lemma \ref{lemma2-quar-root},\ we can easily see that (\ref{S7}) has four solutions in $\ftwon$ if and only if
  ${\rm Tr}_1^n(\omega_1)={\rm Tr}_1^n(\omega_2)={\rm Tr}_1^n(\omega_3)=0$.

  Summarizing all cases, the  $\rm{FBCT}$ of $F(x)$ satisfies
  \begin{eqnarray}\label{eq-u=4}
 {\rm FBCT}_F(a,b) = \begin{cases}
		2^n, &  {\rm if}\,\, ab(a+b) = 0; \\
		4, &  {\rm if}\,\, ab(a+b)\neq 0,\,\, c^2+c+1\neq 0\\
           &  {\rm and}\,\, {\rm Tr}_1^n(\omega_1)={\rm Tr}_1^n(\omega_2)={\rm Tr}_1^n(\omega_3)=0; \\
		0, &  {\rm otherwise}, \end{cases}
  \end{eqnarray}
when {\rm 3} $\nmid n$, and for {\rm 3} $\mid n$, the {\rm FBCT} of $F(x)$ satisfies
 \begin{eqnarray}\label{eq-u=8}
  {\rm FBCT}_F(a,b) = \begin{cases}
		2^n, &  {\rm if}\,\, ab(a+b) = 0; \\
        8, &  {\rm if}\,\, ab(a+b)\neq 0,\,\, c^2+c+1\neq 0,\,\, \frac{a}{b}\in{\mathbb F}_{2^3}\\
           &  {\rm and}\,\, {\rm Tr}_1^n(\omega_1)={\rm Tr}_1^n(\omega_2)={\rm Tr}_1^n(\omega_3)=0;\\
		4, &  {\rm if}\,\, ab(a+b)\neq 0,\,\, c^2+c+1\neq 0,\,\, \frac{a}{b}\in{\mathbb F}_{2^n}\setminus{\mathbb F}_{2^3}\\
           &  {\rm and}\,\, {\rm Tr}_1^n(\omega_1)={\rm Tr}_1^n(\omega_2)={\rm Tr}_1^n(\omega_3)=0; \\
		0, &  {\rm otherwise}. \end{cases}
 \end{eqnarray}

  This completes the proof of part (A) of Theorem \ref{main-thm}.

\subsection{Proof of part (B) of Theorem \ref{main-thm}\label{proof2}}

In this subsection,  we will explore the proof of part (B) of Theorem \ref{main-thm}. We will discuss two lemmas about quadratic equations and an exponential sum connected to the Kloosterman sum over $\mathbb{F}_{2^n}$. These will be important for the subsequent discussions.
\subsubsection{Some auxiliaries  results}

\begin{lemma}\label{lemma1-qua-root}\rm (\cite{LR.NH1997})
 Let $a, b, c\in\ftwon, a\neq0$ and $F(x)=ax^2+bx+c$. Then
\begin{itemize}
\item [\rm (1)] $F(x)$ has exactly one root in $\ftwon$ if and only if $b=0$;
\item [\rm (2)] $F(x)$ has exactly two roots in $\ftwon$ if and only if $b\ne0$ and ${\rm Tr}_1^n(\frac{ac}{b^2})=0$,
\item [\rm (3)] $F(x)$ has no root in $\ftwon$ if and only if $b\ne0$ and ${\rm Tr}_1^n(\frac{ac}{b^2})=1$.
\end{itemize}
\end{lemma}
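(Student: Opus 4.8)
The plan is to reduce the general binary quadratic to the canonical Artin--Schreier form and then apply the standard trace criterion for its solvability over $\ftwon$. First I would handle the degenerate case $b=0$: here $F(x)=ax^2+c=0$ is equivalent to $x^2=c/a$, and since the Frobenius map $x\mapsto x^2$ is a bijection of $\ftwon$, this has the unique solution $x=(c/a)^{2^{n-1}}$. Thus $b=0$ yields exactly one root, which is the ``if'' direction of (1); the ``only if'' direction will drop out once the case $b\neq0$ is shown to produce an even number of roots.

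Next, assuming $b\neq0$, I would apply the bijective linear substitution $x=(b/a)y$. A short computation turns $F(x)=0$ into $(b^2/a)y^2+(b^2/a)y+c=0$, and multiplying by $a/b^2$ gives the canonical equation $y^2+y=\frac{ac}{b^2}$. Because the substitution is a bijection of $\ftwon$, the roots of $F$ are in one-to-one correspondence with the solutions of this equation, so it suffices to count the latter.

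Finally I would invoke the solvability criterion for $y^2+y=\delta$ with $\delta=\frac{ac}{b^2}$. The additive map $L(y)=y^2+y$ is $\ftwo$-linear with kernel $\{0,1\}$, hence two-to-one onto its image; moreover ${\rm Tr}_1^n(L(y))={\rm Tr}_1^n(y^2)+{\rm Tr}_1^n(y)=0$ for every $y$, so the image is contained in the trace-zero hyperplane, and a rank count ($\dim{\rm Im}\,L=n-1$) shows the image equals that hyperplane. Therefore $y^2+y=\delta$ has exactly two solutions when ${\rm Tr}_1^n(\delta)=0$ and none when ${\rm Tr}_1^n(\delta)=1$. Translating back through the substitution yields cases (2) and (3), and confirms that for $b\neq0$ the root count is $0$ or $2$, completing (1).

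The whole argument is elementary; the one step that genuinely uses the structure of $\ftwon$ is the last one, namely the identification of the image of $y\mapsto y^2+y$ with the \emph{entire} trace-zero hyperplane rather than merely a subspace of it. This is where the rank--nullity count is essential, and it is the crux on which the trace criterion -- and hence the dichotomy between two roots and no roots -- ultimately rests.
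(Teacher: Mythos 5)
Your proof is correct. The paper states this lemma without proof, citing it directly from Lidl and Niederreiter \cite{LR.NH1997}, and your argument---reducing via $x=(b/a)y$ to the Artin--Schreier equation $y^2+y=\frac{ac}{b^2}$ and applying the trace criterion, with the rank--nullity count identifying the image of $y\mapsto y^2+y$ with the full trace-zero hyperplane---is precisely the standard textbook derivation of this criterion, so there is nothing to add.
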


\begin{lemma}\label{lemma4-S}\rm
Let $n$ be a positive integer. Then
$$
S=\sum\limits_{x\in\ftwon}(-1)^{{\rm Tr}_1^n(\frac{x+1}{x^2+x+1})}=
\begin{cases}
   K_n(1)-2, &  {\rm if}\,\, n\,\, {\rm is\,\, odd} ;\\
   K_n(1), &  {\rm if}\,\, n\,\, {\rm is\,\, even}.\\
\end{cases}
$$
\end{lemma}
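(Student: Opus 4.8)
The plan is to reduce $S$ to the Kloosterman sum $K_n(1)$ through two successive two-to-one substitutions: the first inverting $y\mapsto y+y^{-1}$, the second inverting the Artin--Schreier map $u\mapsto u^2+u$. First I would apply the affine change of variable $x\mapsto x+1$, which fixes the denominator (since $(x+1)^2+(x+1)+1=x^2+x+1$) and replaces the numerator $x+1$ by $x$, giving
$$S=\sum_{y\in\ftwon}(-1)^{{\rm Tr}_1^n\!\left(\frac{y}{y^2+y+1}\right)}.$$
The term $y=0$ contributes $1$. For $y\neq 0$ I would divide through by $y$ to write $\frac{y}{y^2+y+1}=\frac{1}{u+1}$ with $u=y+y^{-1}$; one checks that the degenerate points where $y^2+y+1=0$ (which exist in $\ftwon$ only when $n$ is even) are exactly those with $u=1$, so the convention $\tfrac10:=0$ makes this identity hold uniformly for every $y\neq 0$.

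Next I would push the sum forward along $y\mapsto u=y+y^{-1}$. This map sends $y=1$ to $u=0$ with a single preimage, and is two-to-one from $\ftwon^\ast\setminus\{1\}$ onto $\{u\neq 0:{\rm Tr}_1^n(u^{-1})=0\}$, since $y+y^{-1}=u$ is equivalent to $y^2+uy+1=0$, which by Lemma \ref{lemma1-qua-root} is solvable precisely when ${\rm Tr}_1^n(u^{-2})={\rm Tr}_1^n(u^{-1})=0$ (the trace being invariant under squaring). Writing $A=\sum_{u\neq 0,\,{\rm Tr}_1^n(u^{-1})=0}(-1)^{{\rm Tr}_1^n(1/(u+1))}$, the contribution from $y\neq 0$ becomes $(-1)^{{\rm Tr}_1^n(1)}+2A=(-1)^n+2A$. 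I would then linearize the constraint through ${\bf 1}[{\rm Tr}_1^n(u^{-1})=0]=\tfrac12\bigl(1+(-1)^{{\rm Tr}_1^n(u^{-1})}\bigr)$, splitting $A$ into two unconstrained sums over $u\neq 0$.

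The first unconstrained sum $\sum_{u\neq0}(-1)^{{\rm Tr}_1^n(1/(u+1))}$ I would evaluate using that $u\mapsto 1/(u+1)$ is a bijection of $\ftwon$: the full sum over $\ftwon$ equals $\sum_{z}(-1)^{{\rm Tr}_1^n(z)}=0$, so removing $u=0$ leaves $-(-1)^n$. For the second sum $\sum_{u\neq 0}(-1)^{{\rm Tr}_1^n(1/(u+1)+1/u)}$, I would use $\frac{1}{u+1}+\frac1u=\frac{1}{u^2+u}$ (valid for $u\neq0,1$, with the isolated point $u=1$ dealt with separately via the convention) and push forward along the two-to-one Artin--Schreier map $v=u^2+u$ onto $\{v:{\rm Tr}_1^n(v)=0\}$. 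A second linearization then produces $\sum_{v\neq0}(-1)^{{\rm Tr}_1^n(v+v^{-1})}=K_n(1)-1$ together with the elementary sum $\sum_{v\neq0}(-1)^{{\rm Tr}_1^n(v^{-1})}=-1$, giving $K_n(1)-2$ for this block.

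Assembling the pieces yields $S=K_n(1)+(-1)^n-1$, which is $K_n(1)-2$ for $n$ odd and $K_n(1)$ for $n$ even, as claimed. I expect the main obstacle to be not the substitutions themselves but the disciplined bookkeeping of the convention $\tfrac10:=0$ at the poles $y^2+y+1=0$, $u=1$, and $u^2+u=0$: these exceptional points are exactly what contribute the parity-dependent terms $(-1)^n$, and tracking whether $x^2+x+1$ splits over $\ftwon$ (equivalently whether ${\rm Tr}_1^n(1)=0$) is precisely what separates the odd and even cases.
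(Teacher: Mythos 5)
Your proof is correct --- I verified the fiber counts, the pole handling under the convention $\frac{1}{0}:=0$, and the final assembly $S=K_n(1)+(-1)^n-1$, which matches both parity cases (and checks numerically over $\ftwo$ and ${\mathbb F}_{4}$) --- but your evaluation of the constrained sum takes a genuinely different route from the paper's. The opening move is the paper's in disguise: the paper pushes the sum forward along $x\mapsto h=\frac{x+1}{x^2+x+1}$, using Lemma \ref{lemma1-qua-root} to see that the fibers have size $2$ exactly when ${\rm Tr}_1^n(\frac{1}{h+1})$ takes the right value, and your $u=y+y^{-1}$ with $h=\frac{1}{u+1}$ is the same two-to-one pushforward in a cleaner parametrization. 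The divergence comes next: the paper evaluates the resulting constrained sum $\sum_{h,\,{\rm Tr}_1^n(1/h)=1}(-1)^{{\rm Tr}_1^n(h)}$ by counting $\Phi=\{h:{\rm Tr}_1^n(1/h)={\rm Tr}_1^n(h)=1\}$ via double orthogonality, where $K_n(1)$ surfaces as $\sum_h(-1)^{{\rm Tr}_1^n(h+1/h)}$, and it runs the odd and even cases as two separate computations; you instead linearize the constraint once, combine $\frac{1}{u+1}+\frac{1}{u}=\frac{1}{u^2+u}$, and perform a second two-to-one pushforward along the Artin--Schreier map $v=u^2+u$ followed by a second linearization, landing directly on $\sum_{v\neq0}(-1)^{{\rm Tr}_1^n(v+v^{-1})}=K_n(1)-1$. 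Your chain buys uniformity --- the parity enters only through ${\rm Tr}_1^n(1)=n\bmod 2$ at the exceptional points, packaged as $(-1)^n$, whereas the paper duplicates the argument for even $n$ --- at the cost of one extra substitution layer and more delicate bookkeeping at the poles; the paper's $\Phi$-count, for its part, is structurally the same orthogonality computation it reuses for $|D_1|$ in the proof of part (B), so it keeps a single technique throughout. One detail to spell out in a full write-up: at $u=1$ the identity $\frac{1}{u+1}+\frac{1}{u}=\frac{1}{u^2+u}$ fails under the convention, and that term of your second sum equals $(-1)^{{\rm Tr}_1^n(1)}=(-1)^n$ rather than $+1$; your final total is consistent with this value, so nothing breaks, but the cancellation $-(-1)^n+(-1)^n$ between your two linearized pieces is exactly where this term matters.
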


\begin{proof}
We calculate $S$ according to the parity of $n$ as below.

\textbf{Case 1:} If $n$ is odd, we have ${\rm Tr}_1^n(1)=1$ and then $x^2+x+1\ne0$ due to Lemma \ref{lemma1-qua-root}. Let $h=\frac{x+1}{x^2+x+1}$, then we have $hx^2+(h+1)x+h+1=0$. If $h=0$ or $h=1$, it has exactly one solution, namely, $x=1$, or $x=0$ respectively. For $h\ne0,1$,  again by Lemma \ref{lemma1-qua-root}, one has that it has two solutions if and only if ${\rm Tr}_1^n(\frac{h(h+1)}{h^2+1})=0$, which is equivalent to ${\rm Tr}_1^n(\frac{1}{h+1})=1$.
Then we have
$$S=(-1)^{{\rm Tr}_1^n(0)}+(-1)^{{\rm Tr}_1^n(1)}+2\sum_{h\in\ftwon\setminus\{0,1\},{\rm Tr}_1^n(\frac{1}{h+1})=1}(-1)^{{\rm Tr}_1^n(h)}.$$
Note that
$$2\sum_{h\in\ftwon\setminus\{0,1\},{\rm Tr}_1^n(\frac{1}{h+1})=1}(-1)^{{\rm Tr}_1^n(h)}
=2\sum_{h\in\ftwon\setminus\{0,1\},{\rm Tr}_1^n(\frac{1}{h})=1}(-1)^{{\rm Tr}_1^n(h+1)}$$
which leads to
$$S=-2\sum_{h\in\ftwon\setminus\{0,1\},{\rm Tr}_1^n(\frac{1}{h})=1}(-1)^{{\rm Tr}_1^n(h)}=-2(\sum_{h\in\ftwon,{\rm Tr}_1^n(\frac{1}{h})=1}(-1)^{{\rm Tr}_1^n(h)}+1).$$
Observe that
$$
\begin{aligned}
\sum_{h\in\ftwon,{\rm Tr}_1^n(\frac{1}{h})=1}(-1)^{{\rm Tr}_1^n(h)}&=\sum_{h\in\ftwon,{\rm Tr}_1^n(\frac{1}{h})=1,{\rm Tr}_1^n(h)=0}1-\sum_{h\in\ftwon,{\rm Tr}_1^n(\frac{1}{h})=1,{\rm Tr}_1^n(h)=1}1\\
&=\sum_{h\in\ftwon,{\rm Tr}_1^n(\frac{1}{h})=1}1-2\sum_{h\in\ftwon,{\rm Tr}_1^n(\frac{1}{h})=1,{\rm Tr}_1^n(h)=1}1\\
&=2^{n-1}-2|\Phi|,
\end{aligned}
$$
where $\Phi=\{h\in\ftwon:{\rm Tr}_1^n(\frac{1}{h})=1,{\rm Tr}_1^n(h)=1\}$ and  satisfies
$$
\begin{aligned}
4|\Phi|&=\sum_{h\in\ftwon}\sum_{u_1\in\ftwo}(-1)^{u_1(1+{\rm Tr}_1^n(\frac{1}{h}))}\sum_{u_2\in\ftwo}(-1)^{u_2(1+{\rm Tr}_1^n(h))}\\
&=\sum_{h\in\ftwon}\Big(1+(-1)^{1+{\rm Tr}_1^n(\frac{1}{h})}\Big)\Big(1+(-1)^{1+{\rm Tr}_1^n(h)}\Big)\\
&=\sum_{h\in\ftwon}\Big(1-(-1)^{{\rm Tr}_1^n(\frac{1}{h})}-(-1)^{{\rm Tr}_1^n(h)}+(-1)^{{\rm Tr}_1^n(\frac{1}{h}+h)}\Big)\\
&=2^n+K_n(1).
\end{aligned}
$$
It is clear that $2|\Phi|=2^{n-1}+\frac{1}{2}K_n(1)$. Then we have
$$\sum_{h\in\ftwon,{\rm Tr}_1^n(\frac{1}{h})=1}(-1)^{{\rm Tr}_1^n(h)}=-\frac{1}{2}K_n(1)$$
and the desired result follows.

\textbf{Case 2:}
If $n$ is even, then ${\rm Tr}_1^n(1)=0$ and $x^2+x+1=0$ has two solutions by Lemma \ref{lemma1-qua-root}. Note that $1/0:=0$. Then, similarly to the case of $n$ is odd, let $g=\frac{x+1}{x^2+x+1}$, one obtains
$$
\begin{aligned}
S&=3\cdot(-1)^{{\rm Tr}_1^n(0)}+(-1)^{{\rm Tr}_1^n(1)}+2\sum\limits_{g\in\ftwon\setminus\{0,1\},{\rm Tr}_1^n(\frac{1}{g+1})=0}(-1)^{{\rm Tr}_1^n(g)}\\
&=4+2\sum\limits_{g\in\ftwon\setminus\{0,1\},{\rm Tr}_1^n(\frac{1}{g})=0}(-1)^{{\rm Tr}_1^n(g+1)}\\
&=4+2\sum\limits_{g\in\ftwon\setminus\{0,1\},{\rm Tr}_1^n(\frac{1}{g})=0}(-1)^{{\rm Tr}_1^n(g)}.\\
\end{aligned}
$$
It is not difficult to see that
$$
\begin{aligned}
\sum\limits_{g\in\ftwon\setminus\{0,1\},{\rm Tr}_1^n(\frac{1}{g})=0}(-1)^{{\rm Tr}_1^n(g)}&=\sum\limits_{g\in\ftwon\setminus\{0,1\}}(-1)^{{\rm Tr}_1^n(g)}-\sum\limits_{g\in\ftwon\setminus\{0,1\},{\rm Tr}_1^n(\frac{1}{g})=1}(-1)^{{\rm Tr}_1^n(g)}.\\
\end{aligned}
$$
By the balance property of the trace function and the discussion in Case 1, we have
\[\sum\limits_{g\in\ftwon\setminus\{0,1\}}(-1)^{{\rm Tr}_1^n(g)}=-2,\;\;\sum\limits_{g\in\ftwon\setminus\{0,1\},{\rm Tr}_1^n(\frac{1}{g})=1}(-1)^{{\rm Tr}_1^n(g)}=-\frac{1}{2}K_n(1)\]
which indicates that $S=4+2(-2-(-\frac{1}{2}K_n(1)))=K_n(1)$.\\
This completes this proof of Lemma \ref{lemma4-S}.
\end{proof}

\subsubsection{Proof of part (B) of Theorem \ref{main-thm}}

We are ready to present the proof of part (B) of Theorem \ref{main-thm}.


From the part (A) of Theorem \ref{main-thm}, it is sufficient to determine the values of
$$\Theta_i=|\{(a,b)\in \ftwon^2:{\rm FBCT}_F(a,b)=i\}|$$
for $i=2^n,0,4,8$. Clearly, $\Theta_{2^n}=3\times2^n-2$ since ${\rm FBCT}_F(a,b)=2^n$ if and only if $ab(a+b)=0$.

According to \eqref{eq-u=4} and \eqref{eq-u=8}, we then proceed with the proof as follows:

\textbf{Case 1: } $3\nmid n$.

Recall that $c=\frac{a}{b}$, $w_1=\frac{1}{c^2+c+1}$, $w_2=\frac{c^2}{c^2+c+1}$ and $w_3=\frac{c^2+1}{c^2+c+1}$. Then by \eqref{eq-u=4} and the fact $w_3=w_1+w_2$, one can conclude that $\Theta_4=(2^n-1)|D|$, where $D$ is given by
 $$D=\{c\in\ftwon\setminus\{0,1\}:  c^2+c+1\ne 0, {\rm Tr}_1^n(\frac{1}{c^2+c+1})=0, {\rm Tr}_1^n(\frac{c^2}{c^2+c+1})=0\}.$$

To compute the cardinality of $D$, define
\begin{eqnarray*}
  D_1&=&\{c\in\ftwon: {\rm Tr}_1^n(\frac{1}{c^2+c+1})=0, {\rm Tr}_1^n(\frac{c^2}{c^2+c+1})=0\} \\
  D_2&=&\{c\in\{0,1\}: {\rm Tr}_1^n(\frac{1}{c^2+c+1})=0, {\rm Tr}_1^n(\frac{c^2}{c^2+c+1})=0\}\\
  D_3&=&\{c^2+c+1=0: {\rm Tr}_1^n(\frac{1}{c^2+c+1})=0, {\rm Tr}_1^n(\frac{c^2}{c^2+c+1})=0\}
\end{eqnarray*}
and correspondingly, one gets
$$|D|=|D_1|-|D_2|-|D_3|.$$
By Lemma \ref{lemma1-qua-root} and the definition of the trace function, one can obtain $|D_3|=|D_2|=0$ if $n$ is odd and otherwise $|D_3|=|D_2|=2$. Thus, we have $|D|=|D_1|$ if $n$ is odd and $|D|=|D_1|-4$ if $n$ is even. Observe that
\begin{eqnarray*}
4|D_1|&=&\sum_{c\in\ftwon}\sum_{u_1\in\ftwo}(-1)^{u_1{\rm Tr}_1^n(\frac{1}{c^2+c+1})}\sum_{u_2\in\ftwo}(-1)^{u_2{\rm Tr}_1^n(\frac{c^2}{c^2+c+1})} \\
&=&2^n+\sum_{c\in\ftwon}(-1)^{{\rm Tr}_1^n(\frac{1}{c^2+c+1})}+\sum_{c\in\ftwon}(-1)^{{\rm Tr}_1^n(\frac{c^2}{c^2+c+1})}+\sum_{c\in\ftwon}(-1)^{{\rm Tr}_1^n(\frac{c^2+1}{c^2+c+1})} \\
&=&2^n+\sum_{c\in\ftwon}(-1)^{{\rm Tr}_1^n(\frac{1}{c^2+c+1})}+(-1)^{{\rm Tr}_1^n(1)}\Big(\sum_{c\in\ftwon}(-1)^{{\rm Tr}_1^n(\frac{c+1}{c^2+c+1})}+\sum_{c\in\ftwon}(-1)^{{\rm Tr}_1^n(\frac{c}{c^2+c+1})}\Big).
\end{eqnarray*}
Further, we have
\begin{eqnarray*}
 \sum_{c\in\ftwon}(-1)^{{\rm Tr}_1^n(\frac{1}{c^2+c+1})}&=&\sum_{\frac{1}{c}\in\ftwon^\ast}(-1)^{{\rm Tr}_1^n(\frac{1}{\frac{1}{c^2}+\frac{1}{c}+1})}+(-1)^{{\rm Tr}_1^n(1)}\\
 &=&(-1)^{{\rm Tr}_1^n(1)}\sum_{c\in\ftwon}(-1)^{{\rm Tr}_1^n(\frac{c+1}{c^2+c+1})}-1+(-1)^{{\rm Tr}_1^n(1)},\\
 \sum\limits_{c\in\ftwon}(-1)^{{\rm Tr}_1^n(\frac{c}{c^2+c+1})}  &=& \sum\limits_{c\in\ftwon}(-1)^{{\rm Tr}_1^n(\frac{c+1}{(c+1)^2+(c+1)+1})}= \sum\limits_{c\in\ftwon}(-1)^{{\rm Tr}_1^n(\frac{c+1}{c^2+c+1})}.
\end{eqnarray*}
This together with Lemma \ref{lemma4-S} implies that $|D|=|D_1|=(2^n-3K_n(1)+4)/4$ if $n$ is odd and otherwise $|D|=|D_1|-4=(2^n+3K_n(1)-16)/4$.
Then by $\Theta_4=(2^n-1)|D|$, one gets
\begin{eqnarray}\label{eq-theta4}
\Theta_4=
\begin{cases}
   \frac{(2^n-1)(2^n-3K_n(1)+4)}{4}, &  {\rm if}\,\, n\,\, {\rm is\,\, odd} ;\\
   \frac{(2^n-1)(2^n+3K_n(1)-16)}{4}, &  {\rm if}\,\, n\,\, {\rm is\,\, even},\\
\end{cases}
\end{eqnarray}
and consequently, by \eqref{eq-u=4}, one obtains
$$
\Theta_0=
\begin{cases}
   \frac{(2^n-1)(3\cdot2^n+3K_n(1)-12)}{4}, &  {\rm if}\,\, n\,\, {\rm is\,\, odd} ;\\
   \frac{(2^n-1)(3\cdot2^n-3K_n(1)+8)}{4}, &  {\rm if}\,\, n\,\, {\rm is\,\, even}.\\
\end{cases}
$$

\textbf{Case 2: } $3\mid n$.

For this case, by \eqref{eq-u=4}, \eqref{eq-u=8} and \eqref{eq-theta4}, one can conclude that
\begin{eqnarray*}
\Theta_4+\Theta_8=
\begin{cases}
   \frac{(2^n-1)(2^n-3K_n(1)+4)}{4}, &  {\rm if}\,\, n\,\, {\rm is\,\, odd} ;\\
   \frac{(2^n-1)(2^n+3K_n(1)-16)}{4}, &  {\rm if}\,\, n\,\, {\rm is\,\, even}.\\
\end{cases}
\end{eqnarray*}

Building on our previous discussions and referring again to \eqref{eq-u=8}, we can derive that $\Theta_8=(2^n-1)|D|$. This simplifies to $\frac{(2^n-1)(2^3-3K_3(1)+4)}{4}=6(2^n-1)$, given that $K_3(1)=-4$. This derivation allows us to determine the values of $\Theta_4$ and $\Theta_0$.\\
This completes the proof of  (B) of Theorem \ref{main-thm}.

\section{Conclusion}\label{conc}

The Feistel Boomerang Connectivity Table ($\rm{FBCT}$) is a crucial tool for analyzing the security of block ciphers against powerful attacks, such as boomerang attacks. In our research, we focused on the  $\rm{FBCT}$ for the power function $F(x)=x^{2^{n-2}-1}$ over $\mathbb{F}_{2^n}$, where $n>6$ is an integer. We performed detailed manipulations to solve certain equations over $\mathbb{F}_{2^n}$. We used the value of the binary Kloosterman sum at point $1$ to calculate the values of entries in its  $\rm{FBCT}$. Additionally, we determined its Feistel boomerang spectrum.  We emphasise that the function $F$ achieves the lowest Feistel boomerang uniformity among non-APN functions provided that $n$ is not divisible by $3$.

\section*{Acknowledgements}

This work was supported by the National Natural Science Foundation of China (Nos. 62072162 and 12001176), the innovation group project of the Natural Science Foundation of Hubei Province of China (No. 2023AFA021), the Natural Science Foundation of Hubei Province of China (No. 2021CFA079) and the Knowledge Innovation Program of Wuhan-Basic Research (No. 2022010801010319).


\begin{thebibliography}{99}

\bibitem{BE.SA1991} Biham E., Shamir A.: Differential cryptanalysis of DES-like cryptosystems. J. Cryptology, 4(1): 3-72 (1991)


\bibitem{BC.CA2018} Boura C., Canteaut A.: On the boomerang uniformity of cryptographic S-boxes. IACR Trans. Symmetric Cryptol., 2018(3): 290-310 (2018)

\bibitem{BH.HP2020} Boukerrou H., Huynh P., Lallemand V., Mandal B., Minier M.: On the Feistel counterpart of the boomerang connectivity table: introduction and analysis of the FBCT. IACR Trans. Symmetric Cryptol., 2020(1): 331-362 (2020)


\bibitem{Carlet-book-2021} Carlet C.:   Boolean Functions for Cryptography and Coding Theory.
 \newblock {\em Cambridge University Press}, Cambridge (2021)

\bibitem{Carlitz.1969} Carlitz L.: Kloosterman sums and finite field extensions. Acta Arithmetica, 16(2): 179-193 (1969)

\bibitem{CC.HT2018} Cid C., Huang T., Peyrin T., Sasaki Y., Song L.: Boomerang connectivity table: a
New cryptanalysis tool. in: J. Nielsen, V. Rijmen (Eds.), Advances in Cryptology-EUROCRYPT 2018, Springer, Cham, 10821: 683-714 (2018)

\bibitem{ES.MS2022} Eddahmani S., Mesnager S.: Explicit values of the DDT, the BCT, the FBCT, and the FBDT of the inverse, the Gold, and the Bracken-Leander S-boxes. Cryptogr. Commun., 14: 1301-1344 (2022)

\bibitem{GK.HS2309} Garg K., Hasan S.U., Riera C., St\v{a}nic\v{a} P.: The second-order zero differential spectra of some functions over finite fields. arXiv:2309.04219 (2023)

\bibitem{GK.HS2310} Garg K., Hasan S.U., Riera C., St\v{a}nic\v{a} P.: The second-order zero differential spectra of some APN and other maps over finite fields. arXiv:2310.13775 (2023)

\bibitem{LR.NH1997} Lidl R., Niederreiter H.: Finite fields. Cambridge University Press, 1997

\bibitem{LG.WJ1990} Lachaud, G., Wolfmann, J.: The weights of the orthogonals of the extended quadratic binary Goppa codes.  IEEE Trans. Inform. Theory., 36, 686-692 (1990)

\bibitem{LP.WK1972} Leonard P.A., Williams K.S.: Quartics over $\ftwon$. Proc. Amer. Math. Soc., 36(2): 347-350 (1972)

\bibitem{LQL19}
Li K.,  Qu L., Sun  B., and Li C.:    New results about the boomerang uniformity of permutation polynomials.
IEEE Transactions on Information Theory, 65(11):7542--7553 (2019)



\bibitem{LX.YT2022} Li X., Yue Q., Tang D.: The second-order zero differential spectra of almost perfect
nonlinear functions and the inverse function in odd characteristic. Cryptogr. Commun.,
14(3): 653-662 (2022)

\bibitem{MY.LN2023} Man Y., Li N., Xiang Z., Zeng X.: On the second-order zero differential spectra of some power functions over finite fields. arXiv:2310.18568 (2023)

\bibitem{MY.MS2023} Man Y., Mesnager S., Li N., Zeng X., Tang X.: In-depth analysis of S-boxes over binary finite fields concerning their differential and Feistel boomerang differential uniformities. Discrete Math., 347(12): 114185 (2024)



\bibitem{Nyberg}
Nyberg N.: On the construction of highly nonlinear
permutations.
\newblock {\em Advances in Cryptology---EUROCRYPT'92}, Lecture Notes in Computer Science, vol.658, pp. 92-98,  Ed. Berlin,
Springer (1993).

\bibitem{Nyberg94}
Nyberg N.: Differentially uniform mappings for cryptography. \newblock {\em Proceedings of EUROCRYPT'93}, Lecture Notes in Computer Science 765, pp. 55-64, 1994. See also \newblock {\em Advances in Cryptology} (Lecture Notes in Computer Science), vol. 765, T. Helleseth, Ed. Berlin, Germany, Springer, 1994, pp. 134-144 (1994)



\bibitem{WD.1999} Wagner D.: The boomerang attack. In: Knudsen L. (eds) Fast Software Encryption. FSE 1999. LNCS. Berlin, Heidelberg, Springer, 1636: 156-170 (1999)

\bibitem{ST.SK2007} Shirai T., Shibutani K., Akishita T., Moriai S., Iwata T.: The 128-bit block cipher CLEFIA (extended abstract). In: Biryukov A. (eds) Fast Software Encryption. FSE 2007. LNCS. Berlin, Heidelberg, Springer, 4593: 181-195 (2007)

\end{thebibliography}
\end{document}